\documentclass[conference]{IEEEtran}
\IEEEoverridecommandlockouts
\usepackage{cite}
\usepackage{amsmath,amssymb,amsfonts}
\usepackage{amsthm}
\usepackage{algorithmic}

\usepackage{graphicx}
\usepackage{textcomp}
\usepackage{xcolor}
\def\BibTeX{{\rm B\kern-.05em{\sc i\kern-.025em b}\kern-.08em
    T\kern-.1667em\lower.7ex\hbox{E}\kern-.125emX}}

\usepackage{fancyhdr}
\newtheorem{definition}{Definition}
\newtheorem{proposition}{Proposition}
\begin{document}

\title{The Ordinal Annotation Game: How Construct Abstraction Shapes Crowdsourced Consensus}

\author{\IEEEauthorblockN{Kosmas Pinitas\\University of Piraeus\\kpinitas@unipi.gr}}

\maketitle
\thispagestyle{fancy}

\begin{abstract}
Inter-annotator disagreement in real-time affect annotation is widely treated as stochastic noise. We challenge this view by modelling ordinal annotation as an implicit game-theoretic coordination process against an internalised population prior under a post-hoc majority vote. We present the Ordinal Annotation Game, a conceptual scaffold in which the mapping from individual effort to collective consensus is governed by the semantic abstraction of the target construct. We evaluate it across two experiments sharing identical interface software and a uniform sensitivity threshold: a controlled sensory tracking study and an in-the-wild engagement study.  Sensory annotation yields a consensus-dominant regime where active updates reinforce agreement, whereas engagement annotation inverts into an effort-limited regime where more labelling penalises consensus. The payoff slope reverses sign under identical processing, showing that ordinal disagreement is a structured behavioural phenomenon, not a discretisation artefact or random error.
\end{abstract}

\begin{IEEEkeywords}
affect annotation, game theory, ordinal labels, coordination, construct abstraction, inter-annotator agreement
\end{IEEEkeywords}

\section{Introduction}
Affect annotation is central to building responsive affective computing systems. Ordinal protocols have gained traction due to the fact that they alleviate the cognitive overhead and calibration issues of absolute-value tracking. However, considerable inter-annotator disagreement persists over time. Conventionally this disagreement is filtered, smoothed, or voted away as random noise \cite{pinitas2025privileged,makantasis2023lab}. Yet this overlooks the strategic nature of annotation: even without peer feedback, annotators balance the effort of altering an interface state against an implicit desire to align with an internalised population prior.

This paper formalises this dynamic by framing independent annotation as a repeated game of implicit strategic coordination \cite{camerer2003behavioral,weidenholzer2010coordination}. We introduce the \emph{Ordinal Annotation Game}, a conceptual scaffold in which annotators transform continuous perceptions into input curves that the designer operationalises into discrete directional actions under a just-noticeable-difference deadband ($\delta$). By keeping this threshold strictly uniform across investigations, we isolate a core factor driving crowdsourced divergence: \textit{construct abstraction}. For low-level perceptual attributes (e.g., sound pitch or colour intensity), a shared decision threshold drives the population toward a consensus-dominant equilibrium. In contrast, for complex psychological constructs (e.g., gameplay engagement), variability in individuals' internal appraisal criteria introduces private noise, shifting the system into an effort-limited regime in which additional individual effort increasingly decouples the judgments of an annotator from the group consensus.

\section{Related Work}\label{sec:related}
Continuous affect annotation has evolved from absolute coordinate tracking \cite{cowie2000feeltrace} toward rank-based, unbounded, and relative interfaces such as PAGAN \cite{melhart2019pagan} and RankTrace \cite{lopes2017ranktrace} to minimise annotator reaction lag and cognitive friction \cite{mariooryad2013analysis}. Pairwise, listwise, and ordinal methods \cite{martinez2014don,pinitas2024varying,pinitas2026lasca} leverage these inputs to capture affective trends. Whereas standard processing treats thresholds as nuisance parameters, we treat thresholding as a primary strategic behaviour. Moreover, traditional reliability coefficients \cite{cohen1960coefficient} and fusion frameworks \cite{ringeval2015av+} assume a singular, recoverable ground truth, whereas modern probabilistic crowd-modelling \cite{dawid1979maximum}, deep crowd layers \cite{rodrigues2018deep}, and distribution-learning paradigms \cite{geng2016label,pinitas2026beyond} preserve annotation ambiguity as a reflection of subjective variance. Finally, we move beyond statistical intersubject correlation \cite{hasson}, autonomic coupling \cite{palumbo2017interpersonal}, and recurrence analyses \cite{wallot2018analyzing} by reframing them within mechanism design and peer-prediction theory \cite{mohammad2013crowdsourcing}. This synthesis reveals construct abstraction as the critical control parameter underlying the transition from cooperative tracking to structural disagreement.

\section{The Ordinal Annotation Game Framework}
We formalise ordinal annotation as an implicit repeated \emph{coordination game} \cite{camerer2003behavioral,weidenholzer2010coordination} whose players are the independent annotators $\mathcal{N}$. Their actions are the discrete directional labels $\mathcal{A}$, and the payoff is the utility of Eq.~\eqref{eq:utility}. Annotators receive no real-time peer feedback, so they cannot best-respond to observed choices; instead they coordinate against an \emph{internalised} expectation of the group. Coordinating on such an internal focal point in the absence of communication is a well-established mechanism in the coordination-game and peer-prediction literature \cite{schelling1958strategy,prelec2004bayesian,miller2005eliciting}. Under this lens, Definition \ref{def:oag} specifies the game and Proposition \ref{prop:frontier} predicts how its equilibrium shifts with construct abstraction.

\subsection{Action Space and Threshold Mapping}\label{sec:setup}
Let $\mathcal{N}=\{1,\dots,N\}$ be the independent annotators tracking a stimulus over time $t$; the stimulus has an underlying true trace $\theta_t$ and observable physical saliency $s_t \ge 0$. At each step, annotator $i$ experiences an internal perceptual change signal $\Delta\hat{z}_{i,t}$---their subjective evaluation of the stimulus delta---and records a continuous trajectory change. To filter micro-fluctuations, the designer applies a post-hoc transformation bounded by a sensitivity deadband of half-width $\delta > 0$ (Fig. \ref{fig:threshold}), mapping the continuous user traces onto a three-element ordered action space:
\begin{equation}
  a_{i,t}\in\mathcal{A}=\{-1,\,0,\,+1\}.
  \label{eq:action}
\end{equation}
The analytical population consensus is defined post-hoc via a majority aggregation rule $y_t = \Pi(a_t) = \text{mode}(a_{1,t}, \dots, a_{N,t})$, the \emph{instantaneous mode}, i.e.\ the most frequent action across annotators within a single time window $t$. This aggregate is computed by the analyst after data collection and is \emph{never} shown to annotators during labelling; the coordination it captures is therefore against each annotator's internalised expectation of the group, not against observed peer labels.

\begin{figure}[t]
\centerline{\includegraphics[width=0.5\columnwidth]{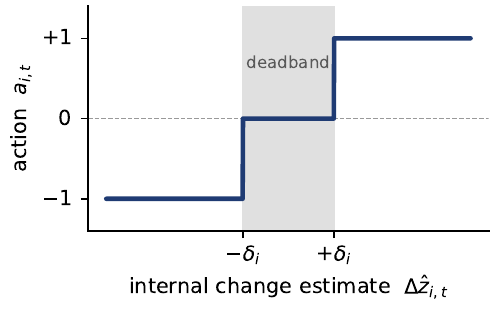}}
\caption{The designer's post-hoc mapping from continuous perception to
discrete ordinal labels. Structural disagreement
emerges as the construct shifts from concrete physical properties to
abstract psychological criteria.}
\label{fig:threshold}
\end{figure}

\subsection{Utility as a Conceptual Scaffold}\label{sec:eq}
We express each annotator's latent instantaneous utility at time $t$ as:
\begin{equation}
  u_{i,t} = \alpha(s_t)\,\mathbb{E}_{y_t}\!\big[C_{i,t}\big] - \beta(s_t)\,E_{i,t} - \gamma\,B_{i,t},
  \label{eq:utility}
\end{equation}
where individual coordination $C_{i,t} = \mathbb{I}(a_{i,t} = y_t)$ captures alignment relative to the expected consensus. Behavioural effort tracks label volatility through the \emph{unsigned} step magnitude, $E_{i,t}=|a_{i,t}-a_{i,t-1}|$, whereas the bias term penalises systematic directional drift through the \emph{signed} counterpart, $B_{i,t}=\big|\sum_{\tau\le t}(a_{i,\tau}-a_{i,\tau-1})\big|=|a_{i,t}-a_{i,0}|$, computed on the same discretised action stream.  $B_{i,t}$ is the net rather than the total displacement and consequently is bounded by cumulative effort ($B_{i,t}\le\sum_{\tau\le t}|a_{i,\tau}-a_{i,\tau-1}|$) yet remains distinct from it: an annotator who updates frequently but symmetrically incurs high effort and low bias. We stress that $B_{i,t}$ appears only in this conceptual utility; it is \emph{not} part of the empirical analysis, which relies solely on effort $E_i$ and agreement $A_i$ (Section~\ref{sec:pred}). We do not treat Eq.~\eqref{eq:utility} as a fitted predictive model or estimate $\alpha,\beta,\gamma$: it is a conceptual vehicle formalising why independent actions sample a stable distribution $P^{\star}(a \mid s_t, \Pi, \delta, \mathbf{\Phi})$, where $\mathbf{\Phi}$ captures the shared or unshared semantic parameters of the construct, motivating the empirical contrast rather than predicting affect trajectories.

\begin{definition}[Ordinal Annotation Game]
\label{def:oag}
Given an aggregation rule $\Pi$, an independent population $\mathcal{N}$ analysed under an explicit deadband $\delta$, and a stimulus path $(\theta_t,s_t)$, the Ordinal Annotation Game is the repeated process wherein independent annotators generate continuous evaluations $\Delta\hat{z}_{i,t}$ that optimise their internal utility expectation \eqref{eq:utility}; the resulting time-series values manifest as empirical samples from the distribution $P^{\star}(a \mid s_t, \Pi, \delta, \mathbf{\Phi})$.
\end{definition}

\subsection{The Unified Phase Transition Hypothesis}\label{sec:pred}
Our formulation explores the empirical relationship between an annotator's labelling effort ($E_i$) and their aggregate consensus agreement ($A_i$). We define these windowed metrics over an operational analytical interval $\mathcal{T}$ containing $T$ time steps:
\begin{equation}
  E_i = \frac{1}{T}\sum_{t \in \mathcal{T}} |a_{i,t} - a_{i,t-1}|,
  \qquad
  A_i = \frac{1}{T}\sum_{t \in \mathcal{T}} \mathbb{I}(a_{i,t} = y_t).
  \label{eq:effort_agree_def}
\end{equation}
We estimate the payoff-frontier slope $dA/dE$ by ordinary least squares over the cohort: $A_i = m E_i + c$, with $m = dA/dE$.  $E_i$ and $A_i$ both derive from the same stream $\{a_{i,t}\}$ and hence they are \emph{not} independent---absent a shared anchor, a more volatile annotator is mechanically more likely to fall off the mode---so the raw sign of $dA/dE$ is necessary but not sufficient. Section~\ref{sec:residual} addresses this using the sensory condition as an identically-processed built-in control.

\begin{proposition}[Semantic Inversion of the Payoff Frontier]
\label{prop:frontier}
Let $\delta$ be fixed uniformly across all annotators. If the task construct is concrete and sensory, the internal change signal is tightly coupled to the physical trace ($\Delta\hat{z}_{i,t} = \Delta\theta_t + \varepsilon_{i,t}$, with minimal noise variance $\sigma_\varepsilon^2$), yielding a positive frontier slope ($dA/dE > 0$). If the task construct is abstract and psychological, internal changes incorporate private appraisal criteria ($\Delta\hat{z}_{i,t} = \Delta\theta_t + \nu_i(t) + \varepsilon_{i,t}$, expanding private variance $\sigma_\nu^2$), driving the slope negative ($dA/dE < 0$).
\end{proposition}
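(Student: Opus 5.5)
We will prove Proposition~\ref{prop:frontier} inside a simple stochastic model of the Ordinal Annotation Game (Definition~\ref{def:oag}), not as a theorem about arbitrary data. The annotator's discretised action is the deadband quantisation of the internal signal:
\[
a_{i,t}=\operatorname{sign}(\Delta\hat z_{i,t})\,\mathbb{I}(|\Delta\hat z_{i,t}|>\delta).
\]
The stimulus increments $\Delta\theta_t$ are drawn from a stationary distribution with some mass at $|\Delta\theta_t|>\delta$, and the noise terms $\varepsilon_{i,t}$ are i.i.d.\ Gaussian with variance $\sigma_\varepsilon^2$. With $N$ large, the mode $y_t$ concentrates on the noiseless quantisation $q(\Delta\theta_t)$ whenever the common component dominates. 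In that case $A_i$ estimates $P(a_{i,t}=q(\Delta\theta_t))$. Both $E_i$ and $A_i$ then become functions of a single per-annotator latent parameter, and the OLS slope over the cohort has the sign of $dA/dE$ along the one-parameter family traced by that parameter. The proof therefore reduces to signing a derivative along the direction in which annotators actually differ.

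\textbf{Sensory case.} Here annotators differ in how closely they track the stimulus. We parametrise this either by an individual gain $g_i$ with $\Delta\hat z_{i,t}=g_i\Delta\theta_t+\varepsilon_{i,t}$ (an under-responsive annotator has $g_i<1$), or, equivalently, by an individual effective threshold. Since $\sigma_\varepsilon^2$ is small, both $E_i$ and $A_i$ are governed by how often $g_i\Delta\theta_t$ crosses $\pm\delta$ in the same place as $\Delta\theta_t$ does. Raising $g_i$ toward $1$ increases the number of transitions in and out of $0$, so $E_i$ increases. It also recovers the true nonzero labels that the mode contains, so $A_i$ increases. We will compute both quantities to first order in $\sigma_\varepsilon$ using Gaussian tail probabilities $\Phi((\pm\delta-g\Delta\theta)/\sigma_\varepsilon)$. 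We then show that $\partial A/\partial g$ and $\partial E/\partial g$ are both positive on the relevant range, which gives $dA/dE>0$.

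\textbf{Abstract case.} Now all annotators share the same $g_i$, and they differ in the variance $\sigma_{\nu,i}^2$ of their private process $\nu_i(t)$. The private term is independent across annotators, so it averages out of the mode: $y_t$ stays near $q(\Delta\theta_t)$, or near $0$ when $\sigma_\nu$ is large. For an individual annotator, however, larger $\sigma_{\nu,i}$ pushes more mass past $\pm\delta$, which raises the number of spurious transitions, so $E_i$ increases. The same mass is misaligned with $y_t$, so $A_i$ decreases. We will show monotonicity in $\sigma_{\nu,i}$ by writing
\[
P\big(a_{i,t}=y_t\big)=\mathbb{E}\big[F_{\sigma_{\nu,i}}(\text{correct cell}\mid\Delta\theta_t)\big],
\]
where $F_\sigma$ is a Gaussian convolution. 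Using the fact that the cell probability of a unimodal symmetric perturbation centred inside the cell decreases with its variance, this gives $\partial A/\partial\sigma_\nu<0<\partial E/\partial\sigma_\nu$, and hence $dA/dE<0$.

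\textbf{Main obstacle.} The hardest part is identifying the dominant axis of inter-annotator variation in each regime. The sign of the slope depends on whether cohort heterogeneity lies mainly in the tracking gain or threshold, or mainly in private appraisal variance. Without an explicit assumption, the mechanical coupling acknowledged after Eq.~\eqref{eq:effort_agree_def} could produce a negative slope in both cases. We will make the assumption explicit: in the sensory case $\operatorname{Var}(g_i)$ dominates, and in the abstract case $\operatorname{Var}(\sigma_{\nu,i})$ dominates. We will then handle the edge case where a mode $y_t$ near $0$ makes the sensory derivative ambiguous, by requiring that the stimulus have enough supra-threshold increments. Under these assumptions the proposition follows; without them it should be read as the testable hypothesis examined in Section~\ref{sec:residual}.
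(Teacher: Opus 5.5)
\textbf{There is a genuine gap: two of the formal steps you plan would fail as stated.}

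Your plan formalises the paper's sketch. That sketch argues only qualitatively: in the sensory case effort reflects alertness to a shared signal, while in the abstract case effort is uncoordinated private movement. It then relies on the identically processed sensory condition to discount the mechanical $E$--$A$ coupling. Making the axis of heterogeneity explicit, as you do, states the premise the sketch leaves implicit.

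\textbf{First gap: the mode does not behave as you assume.} $y_t$ is a plurality of the cohort's own quantised labels, not an average. Independence of $\nu_i$ therefore does not make private variance ``average out''. It reshapes every annotator's cell probabilities, and the mode is their argmax. Once $\sigma_\nu\gg\delta$, the zero cell carries mass at most $2\delta/(\sqrt{2\pi}\,\sigma_\nu)$, so the mode becomes $\operatorname{sign}(\Delta\theta_t)$, not $0$. Wherever the mode's cell does not contain $\Delta\theta_t$, your ``centred inside the cell'' lemma does not apply, and $P(a_{i,t}=y_t)$ \emph{increases} with $\sigma_{\nu,i}$.

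Concretely, take small symmetric sub-threshold increments and let every annotator's spread exceed roughly $2.3\delta$ (so that $\Phi(\delta/\sigma)<2/3$ for the standard normal CDF $\Phi$). Then quiet annotators who emit $0$ disagree with a $\pm1$ mode, and your own model gives $dA/dE>0$.

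The sensory branch has the same flaw. With heterogeneous gains and small noise, the mode for $\Delta\theta_t>0$ is $q(g_{\mathrm{med}}\Delta\theta_t)$, not $q(\Delta\theta_t)$. So $A_i$ is maximised at the median gain and falls for annotators more responsive than the median.

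What both branches need is one-sided heterogeneity: a faithful, quiet majority pins $y_t=q(\Delta\theta_t)$, and only a minority lapses (sensory) or appraises idiosyncratically (abstract). This is the paper's unstated premise that private moves ``fail to win the majority''. Assuming that $\operatorname{Var}(g_i)$ or $\operatorname{Var}(\sigma_{\nu,i})$ dominates does not imply it.

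\textbf{Second gap: effort counts switches, not nonzero labels.} $E_i$ counts label \emph{switches} $|a_{i,t}-a_{i,t-1}|$, so more mass past $\pm\delta$ does not imply more effort. Suppose $\Delta\theta_t$ alternates between $1.2\delta$ and $2\delta$ and the majority has $g=1$. An annotator with $g_i=0.7$ flickers between $0$ and $+1$ ($E_i=1$, $A_i=\tfrac12$). One with $g_i=1$ holds $+1$ ($E_i=0$, $A_i=1$). Hence $\partial E/\partial g<0$ and the sensory slope is negative. Likewise, for increments alternating $\pm2\delta$, private noise \emph{lowers} $E_i$ from $2$.

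A clean repair is to take the increments i.i.d.\ and symmetric. Then $E_i=1-p_{0,i}^2$, where $p_{0,i}$ is annotator $i$'s zero-label rate, and both branches reduce to monotonicity of $p_{0,i}$. This holds in $g_i$. In $\sigma_{\nu,i}$ it holds only if the increments are predominantly sub-threshold, because for $|\Delta\theta_t|>\delta$ a small spread pushes mass \emph{into} the zero cell.

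If you state these stimulus conditions together with the one-sided-cohort condition, your argument becomes a genuine conditional proof, which the paper's sketch never attempts. Without them, the monotonicity claims you plan to prove are false.
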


\begin{proof}[Sketch]
For concrete features, low $\sigma_\varepsilon^2$ means individual effort ($E_i$) reflects alertness along a shared physical signal, so high-effort annotators capture synchronised transitions that others agree with, giving positive effort--consensus covariance ($dA/dE > 0$). For abstract constructs, semantic appraisals ($\nu_i(t)$) diverge: active annotators trigger changes driven by private interpretations that fail to win the majority $y_t$, so effort executes uncoordinated private movements and the slope turns negative ($dA/dE < 0$). Since the mechanical co-determination of $E_i$ and $A_i$ is identical across conditions, the sign reversal (Section~\ref{sec:residual}) isolates the behavioural effect.
\end{proof}

\section{Experimental Design and Datasets}\label{sec:datasets}

To evaluate this semantic phase transition, we analyse data from two distinct ordinal annotation experiments that share the exact same interface software and physical threshold setting ($\delta = 0.05$). Both datasets are segmented using overlapping $3$\,s windows with a $1$\,s step size. All annotation was collected at the University of Malta. $N_1$ and $N_2$ denote two \emph{separate} annotator pools for two different tasks; they are not successive counts of a single sample. In both experiments annotators produced \emph{unbounded} continuous traces in PAGAN using the mouse wheel---scrolling upward or downward to raise or lower the tracked value in real time \cite{lopes2017ranktrace,melhart2019pagan}---which the designer subsequently discretised into the action space $\mathcal{A}$ through the uniform deadband ($\delta = 0.05$). Before thresholding, the traces were min--max normalised.

\subsection{Experiment 1: Controlled Sensory Tracking}
This experiment evaluates low-level sensory perception using a controlled change-detection design \cite{barthet2023knowing} implemented in the PAGAN interface \cite{melhart2019pagan} (Figures \ref{fig:tasks}.a, \ref{fig:tasks}.b). A population of $N_1 = 20$ independent annotators---the $5$ researchers who also took part in Experiment~2 together with $15$ additional student volunteers---tracked explicit, physical properties of continuous stimuli: \textbf{Colour Intensity Task (Visual):} Annotators report real-time directional changes in the colour luminance field of a visual canvas. \textbf{Sound Pitch Task (Auditory):} Annotators report real-time directional pitch changes  of the audio tone.

\subsection{Experiment 2: Abstract Psychological Tracking}
This setup leverages the GameVibe corpus \cite{gamevibe,barthet2023knowing} to evaluate high-level psychological appraisal. A cohort of $N_2 = 5$ researchers---the same five who also participated in Experiment~1---watched one-minute video clips captured from 30 commercial first-person shooter games (Fig. \ref{fig:tasks}.c), annotating the perceived gameplay engagement in real time using PAGAN \cite{melhart2019pagan}. Engagement was chosen as abstract construct since it is a canonical high-level, multi-cue psychological target in affective computing: it has no single physical referent and must be inferred by integrating many latent cues (e.g.\ player movement, combat intensity, interface events), placing it at the opposite end of the abstraction axis from the sensory primitives of Experiment~1 while remaining easy-to-annotate under the  PAGAN interface.

\subsection{Exogenous Saliency and Metric Modelling}

Saliency is defined per experiment. In Experiment~1 each stimulus
exposes a single primitive, so saliency is its rate of change:
luminance change (colour task) and pitch change (sound task). In
Experiment~2 the naturalistic gameplay has no single referent, so
audiovisual saliency comes from a unified pipeline capturing low-level
perceptual change from the raw stream
\cite{itti2001computational,kayser2005mechanisms}. Visual saliency
fuses frame-difference energy (L2 greyscale), optical-flow
magnitude (Farneb\"ack motion), and HSV histogram jump (L1 change,
$16\times16$ histograms), each smoothed and weighted-averaged. Audio
saliency applies the same processing to RMS energy, spectral flux, and
onset strength at $100$\,Hz. All channels are resampled to $10$\,Hz,
aligned to the shortest modality, normalised per
clip, and averaged within each window. The resulting trace captures
complementary primitives that drive attention and shared human reactions, constituting a
stimulus-driven signal that plausibly shapes annotator coordination.

To quantify stimulus-driven behaviour, low-level physical saliency curves ($s$) and group response entropy tracking ($H$) are computed within each $3$\,s window. Crucially, to align with the discrete action space $\mathcal{A}$, both the exogenous saliency fluctuations and response entropy steps are passed through the exact same sensitivity deadband transformation ($\delta = 0.05$), mapping all metrics onto a uniform categorical domain: $s_t, H_t, C_{i,t} \in \{-1,0,+1\}$.

Dynamic couplings are evaluated with Cohen's Kappa rather than a signed correlation due to the fact that both compared tracks already live on the same alphabet $\{-1,0,+1\}$, where a Pearson correlation over the sparse $\pm 1$ codes would overstate agreement. Crucially, $\kappa$ is a \emph{coupling diagnostic}: it does not assert that the two underlying constructs are the same quantity, but measures how often their discretised \emph{directions of change} co-occur above chance. We write $\rho(s,C)$ for the Saliency--Coordination Coupling (stimulus jumps vs.\ group coordination) and $\rho(s,H)$ for the Saliency--Entropy Coupling (salience vs.\ collective answer entropy), with time-resolved versions $\rho_t(\cdot,\cdot)$. High $H$ denotes strong disagreement, with choices scattered across the discrete space.

\subsubsection*{Methodological Limitations and Confounds}
While a uniform threshold $\delta = 0.05$ removes deadband discrepancies, a causal claim about construct abstraction remains confounded by differences between the corpora: Experiment~1 uses synthetic, low-complexity stimuli annotated by a general cohort, whereas Experiment~2 uses complex naturalistic gameplay annotated by expert researchers. These differences in stimulus complexity, population, and designed ground truth are unavoidable in secondary analysis, so we frame the contrast as a strong association rather than an isolated causal coupling.

\section{Equilibrium Analysis}\label{sec:results}

\subsection{The Consensus-Dominant Regime (Sensory Tracking)}
For concrete physical changes under the uniform threshold ($\delta=0.05$), the results confirm a consensus-dominant equilibrium: the population frontier slope $dA/dE$ is strongly positive for both colour intensity ($0.63$, $95\%$ bootstrap CI $[0.48, 0.76]$) and sound pitch ($0.57$, $[0.41, 0.71]$). This confirms the first branch of Proposition \ref{prop:frontier}---for tasks tied to sensory primitives, higher effort yields higher consensus. The couplings differ by modality: colour intensity tracks independently ($\rho(s,C) = -0.02$, $\rho(s,H) = 0.03$), while sound pitch shows an explicit stimulus anchor ($\rho(s,C) = 0.27$, $\rho(s,H) = 0.07$). The temporal micro-dynamics in Fig. \ref{fig:controlled_dyn} confirm this stable landscape: $dA/dE_t$ stays positive across nearly the entire horizon for colour intensity (peaking $\approx 1.8$ near $10$\,s), and for sound pitch it remains safely positive ($\approx 0.5$) even during the low-saliency window ($10$--$23$\,s) where $\rho_t(s,C)$ and $\rho_t(s,H)$ fall to $\approx -0.7$. A uniform discretisation thus sustains a cooperative equilibrium even when stimulus anchors fade.

\begin{figure}[t]
\centering
\begin{minipage}{0.3\columnwidth}\centering
\includegraphics[width=\linewidth]{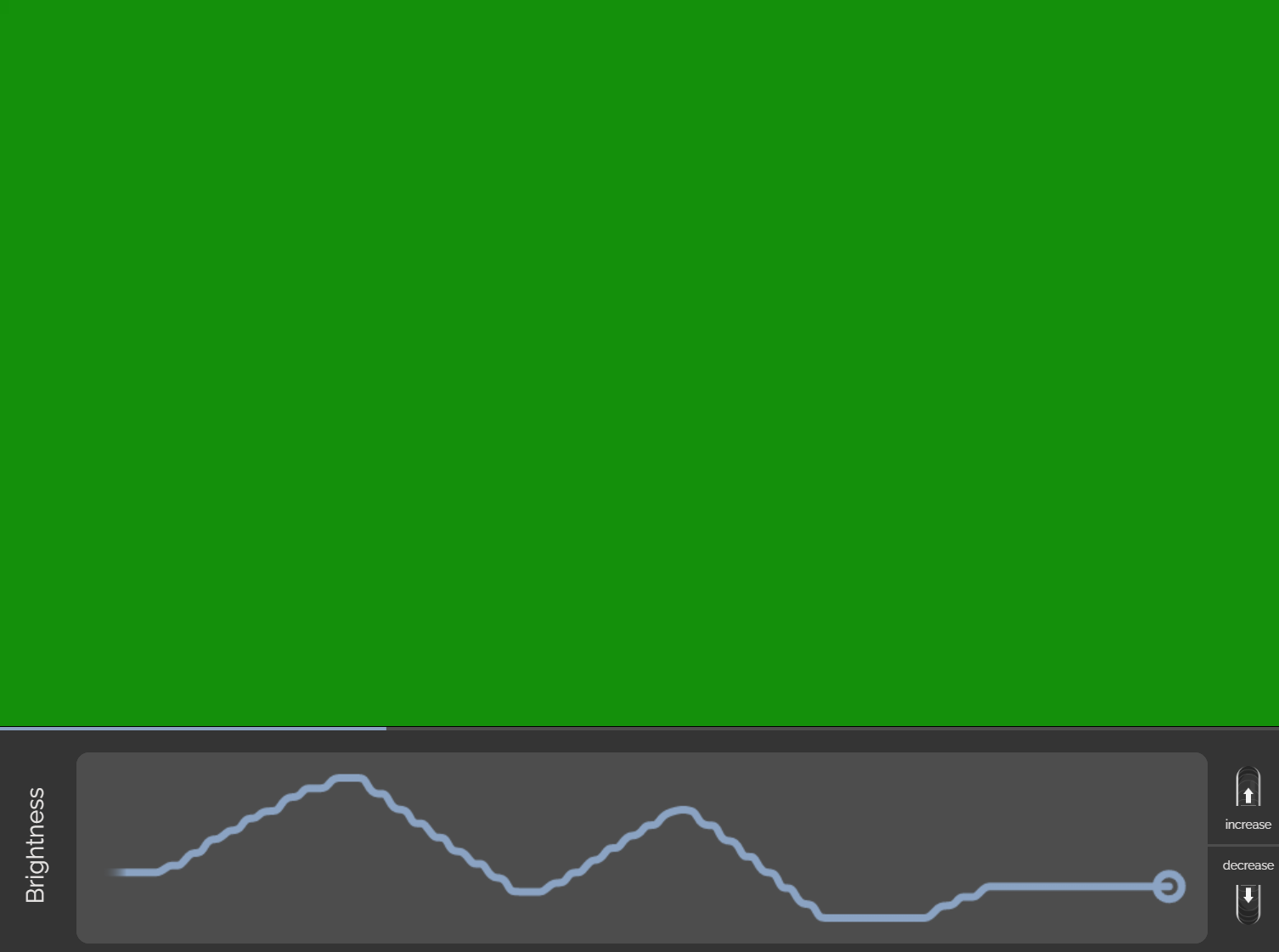}\\
{\footnotesize (a) Colour  task}
\end{minipage}\hfill
\begin{minipage}{0.3\columnwidth}\centering
\includegraphics[width=\linewidth]{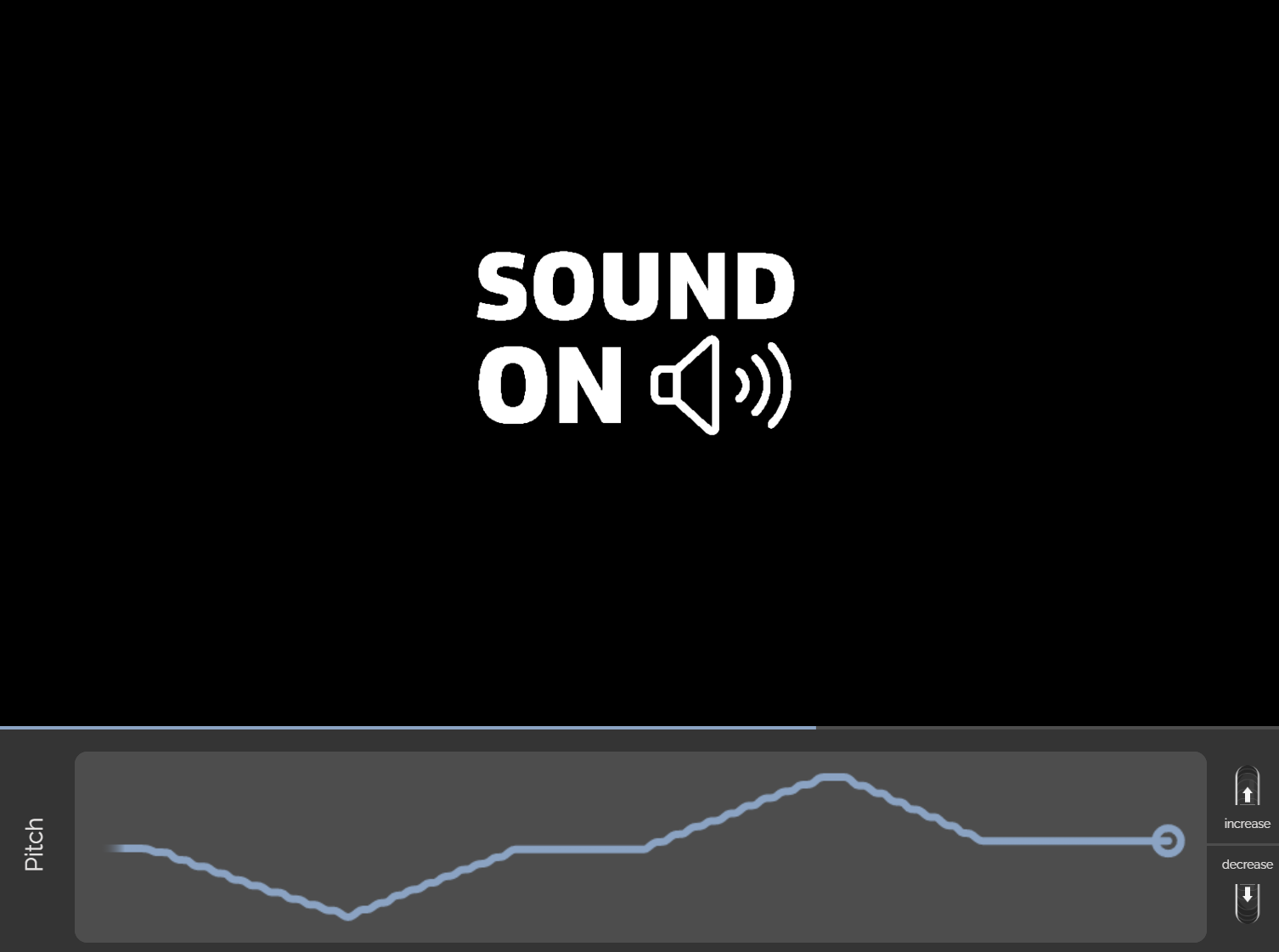}\\
{\footnotesize (b) Sound pitch task}
\end{minipage}\hfill
\begin{minipage}{0.37\columnwidth}\centering
\includegraphics[width=\linewidth]{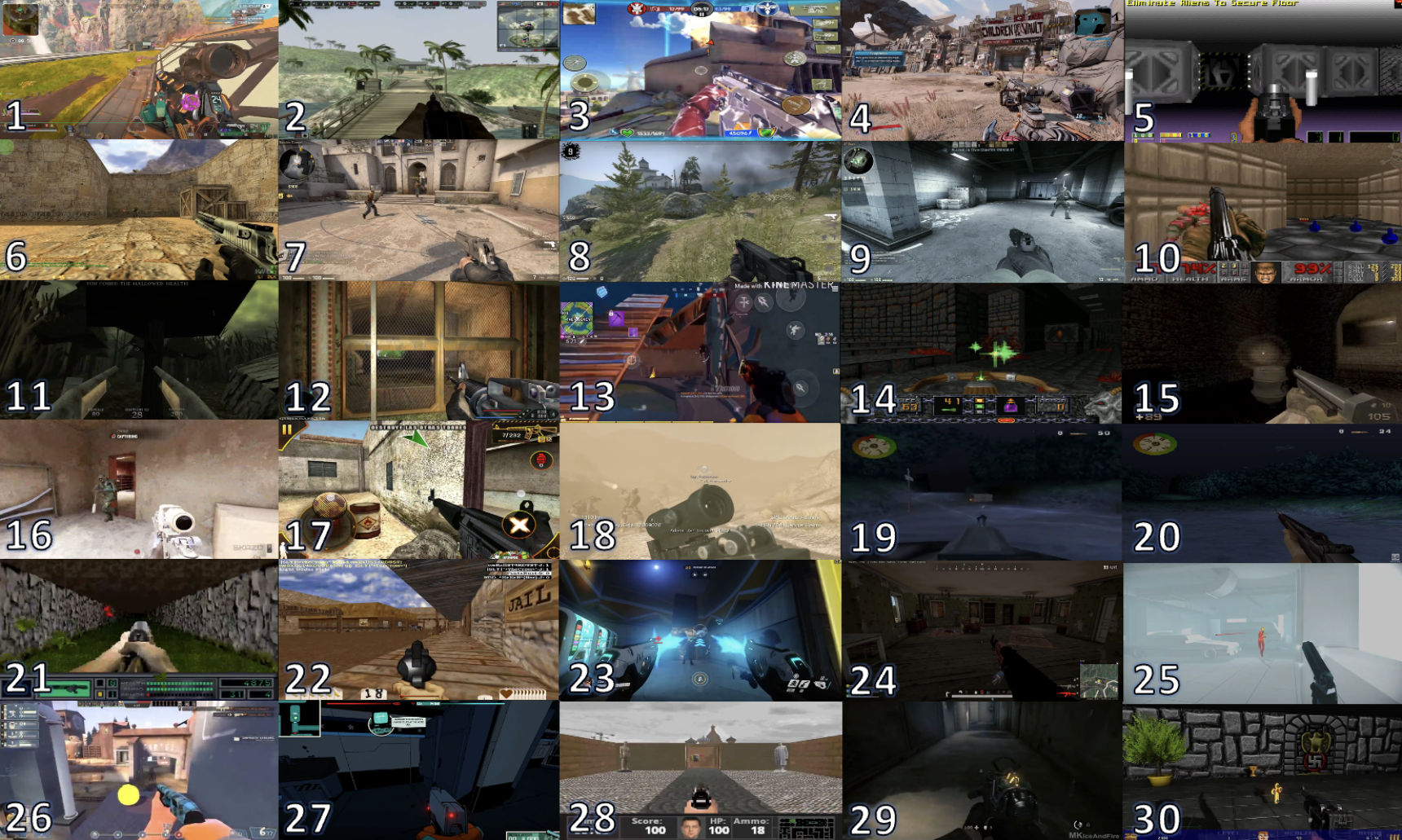}\\
{\footnotesize (c) Engagement task}
\end{minipage}\hfill

\caption{The PAGAN \cite{melhart2019pagan} annotation interfaces,
shared across both experiments: (a) colour intensity and (b) sound
pitch (Experiment~1, sensory), and (c) one-minute first-person shooter
clips for gameplay engagement (Experiment~2).}

\label{fig:tasks}
\end{figure}

\begin{figure}[t]
\centering
\begin{minipage}{0.75\columnwidth}\centering
\includegraphics[width=\linewidth]{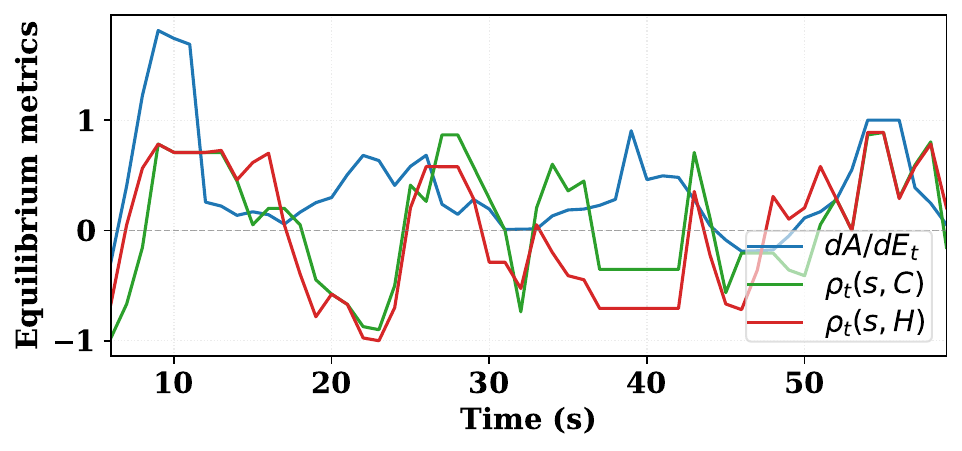}\\
{\footnotesize (a) Colour intensity task}
\end{minipage}
\begin{minipage}{0.75\columnwidth}\centering
\includegraphics[width=\linewidth]{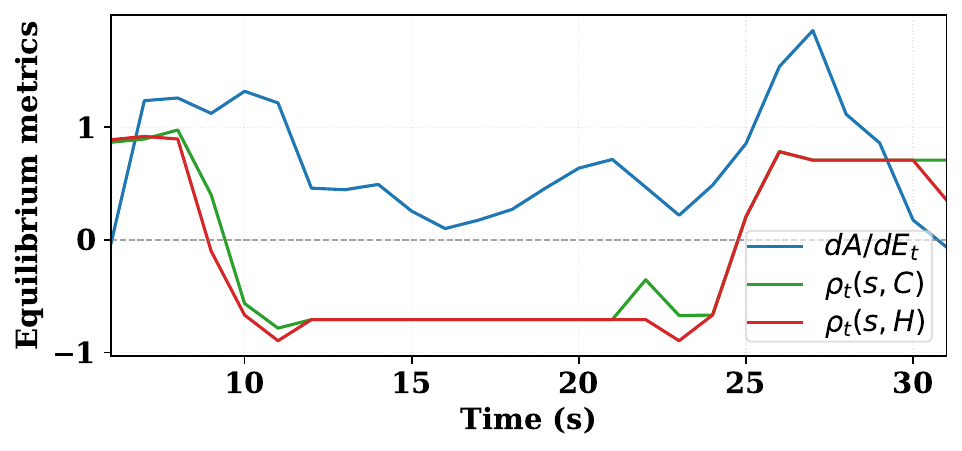}\\
{\footnotesize (b) Sound pitch task}
\end{minipage}
\caption{Temporal equilibrium paths for Experiment 1 (Sensory Tracking). The payoff--frontier slope $dA/dE_t$ (blue) remains robustly positive across the time horizon, validating the consensus-dominant prediction.}
\label{fig:controlled_dyn}
\end{figure}

\subsection{The Effort-Limited Regime ( Engagement Tracking)}
When the task shifts to gameplay engagement, the landscape inverts. Under the same uniform threshold ($\delta=0.05$) and framework, the system enters an effort-limited regime across all streams, the slope collapsing to a near-perfect negative correlation: Visual $-0.98$ $[-0.99,-0.96]$, Auditory $-0.99$ $[-0.99,-0.97]$, Audiovisual $-0.98$ $[-0.99,-0.95]$ (all saliency couplings $\rho(s,C),\rho(s,H)$ within $\pm0.10$ of zero). These three streams are not independent evidence---the audiovisual condition shares annotators and stimuli with its unimodal components---and the tight CIs abutting the $-1$ boundary are themselves consistent with a near-deterministic relationship, which is why the raw slope cannot by itself separate a behavioural inversion from a discretisation artefact.

\subsection{Ruling Out the Discretisation Artefact}\label{sec:residual}

\begin{figure}[t]
\centerline{\includegraphics[width=0.75\columnwidth]{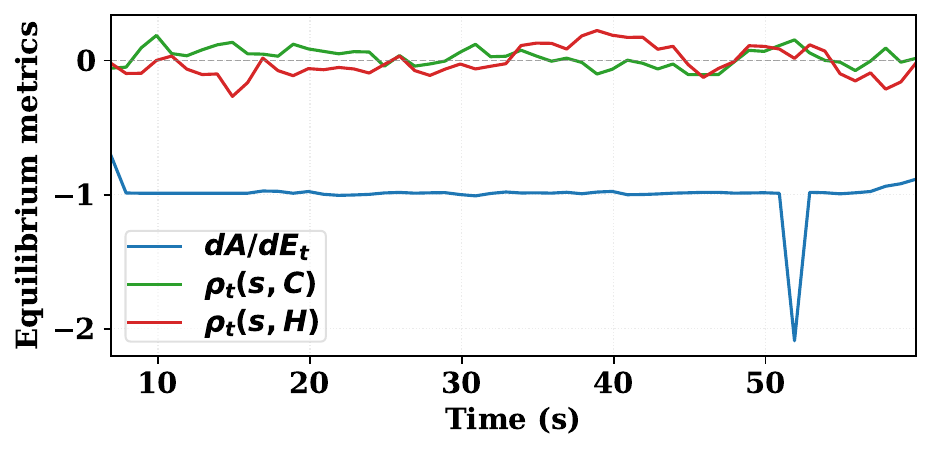}}
\caption{Windowed equilibrium curves for Experiment~2 (gameplay
engagement, $\delta=0.05$). The payoff--frontier slope $dA/dE_t$ (blue)
stays close to $-1$ throughout, contrasting with the positive sensory
frontier.}
\label{fig:gamevibe}
\end{figure}

$E_i$ and $A_i$ are derived from the same action stream $\{a_{i,t}\}$, so some negative coupling is inevitable: more active annotators are mechanically more likely to deviate from the instantaneous majority. Both experiments use the same interface, aggregation, and threshold ($\delta=0.05$); thus, this bias is identical across conditions, making the sensory task a built-in control. If the negative frontier were purely mechanical, both tasks would exhibit the same sign. Instead, sensory is strongly positive ($+0.63$, $+0.57$), whereas engagement is strongly negative ($-0.98$). Although cohorts and corpora also differ (Section~\ref{sec:datasets}), the identical pipeline between the two experiments makes construct abstraction the most parsimonious explanation. We therefore interpret the sign reversal—not the frontier magnitude—as the key result.
Figure~\ref{fig:gamevibe} shows the inversion persists across the $60$\,s session, with $dA/dE_t \approx -1$ for most windows. As engagement is semantically abstract, annotators follow private criteria $\nu_i(t)$ rather than a shared baseline, so idiosyncratic updates are rejected by majority vote, reducing coordination. The saliency couplings $\rho_t(s,C)$ and $\rho_t(s,H)$ remain near zero, while a sharp deviation near $52$\,s ($dA/dE_t \approx -2.1$) reflects conservative annotators holding steady as active ones diverged. Together with the sensory control, this confirms the negative frontier is a behavioural effect of semantic abstraction rather than a discretisation artefact.

\section{Discussion \& Conclusions}
This paper introduced the Ordinal Annotation Game, framing inter-annotator disagreement as a structured behavioural phenomenon shaped by construct abstraction. For concrete sensory judgments, annotators operate in a \emph{consensus-dominant} regime in which active updates reinforce alignment. In contrast, abstract appraisal tasks induce an \emph{effort-limited} regime, where frequent updates reduce rather than improve consensus. Since the identical annotation pipeline produces a \emph{positive} slope for the sensory task, the observed sign reversal reflects differences in the underlying construct rather than the discretisation procedure.

Several limitations bound these conclusions. First, the contrast is associational rather than causal, as the corpora differ in both complexity and participant population. Second, the small cohort ($N_2 = 5$) and the use of a low-level saliency proxy limit statistical power and generalisability. A more controlled test would manipulate construct abstraction within a single corpus while holding other factors constant. Future work could also incorporate psychophysical calibration of annotation schemas, evaluate alternative aggregation rules, and examine equilibrium behaviour in human--AI annotation settings.

Practically, the proposed framework provides a diagnostic for distinguishing structured from random disagreement \emph{before} label aggregation. It motivates construct-aware annotator weighting, treating active off-consensus labels as evidence of informed private appraisal rather than inattentiveness, and offers guidance for the design of annotation interfaces and task constructs. More broadly, it identifies settings in which a single majority-vote ground truth is inappropriate and a distributional representation of annotation is more faithful. By recasting disagreement as the outcome of an implicit equilibrium, the framework shifts its interpretation from measurement noise to an informative behavioural signal, motivating annotation and learning methods that explicitly account for the strategic complexity of perception.

\section*{Ethical Impact Statement}
This secondary re-analysis collects no new human-subjects data; the corpora were gathered under institutional ethics approvals with informed consent, GameVibe \cite{gamevibe} is public, and we use only anonymised, aggregated directional streams with no identifiable content. The per-annotator effort and bias statistics are analytical abstractions---not measures of competence---and must not be used to rank, surveil, or penalise annotators. The findings are associational and bounded by two corpora, small cohorts, and a low-level proxy, limiting generalizability; consistent with the ordinal view of affect, we treat annotations as behavioural measurements, not ground-truth readouts of felt emotion.

\section*{Acknowledgment}
© 2026 IEEE. Personal use of this material is permitted. Permission from IEEE must be obtained for all other uses, in any current or future media, including reprinting/republishing this material for advertising or promotional purposes, creating new collective works, for resale or redistribution to servers or lists, or reuse of any copyrighted component of this work in other works. 

This work has been accepted for publication at the 2026 International Conference on Affective Computing and Intelligent Interaction (ACII) Late-Breaking Research Track.

\bibliographystyle{IEEEtran}
\bibliography{refs}

@inproceedings{cowie2000feeltrace,
  title={Feeltrace: An instrument for recording perceived emotion in real time},
  author={Cowie, Roddy and Douglas-Cowie, Ellen and Savvidou, Susie and McMahon, Edelle and Sawey, Martin and Schr{\"o}der, Marc},
  booktitle={Proceedings of the ISCA Workshop on Speech and Emotion},
  volume={1},
  year={2000}
}

@inproceedings{melhart2019pagan,
  title={PAGAN: Video affect annotation made easy},
  author={Melhart, David and Liapis, Antonios and Yannakakis, Georgios N},
  booktitle={2019 8th international conference on affective computing and intelligent interaction (acii)},
  pages={130--136},
  year={2019},
  organization={IEEE}
}

@inproceedings{mariooryad2013analysis,
  title={Analysis and compensation of the reaction lag of evaluators in continuous emotional annotations},
  author={Mariooryad, Soroosh and Busso, Carlos},
  booktitle={2013 humaine association conference on affective computing and intelligent interaction},
  pages={85--90},
  year={2013},
  organization={IEEE}
}

@article{martinez2014don,
  title={Don’t classify ratings of affect; rank them!},
  author={Martinez, Hector P and Yannakakis, Georgios N and Hallam, John},
  journal={IEEE transactions on affective computing},
  volume={5},
  number={3},
  pages={314--326},
  year={2014},
  publisher={IEEE}
}

@article{cohen1960coefficient,
  title={A coefficient of agreement for nominal scales},
  author={Cohen, Jacob},
  journal={Educational and Psychological Measurement},
  volume={20},
  number={1},
  pages={37--46},
  year={1960},
  publisher={Sage Publications}
}

@inproceedings{ringeval2015av+,
  title={Av+ ec 2015: The first affect recognition challenge bridging across audio, video, and physiological data},
  author={Ringeval, Fabien and Schuller, Bj{\"o}rn and Valstar, Michel and Jaiswal, Shashank and Marchi, Erik and Lalanne, Denis and Cowie, Roddy and Pantic, Maja},
  booktitle={Proceedings of the 5th international workshop on audio/visual emotion challenge},
  pages={3--8},
  year={2015}
}

@article{dawid1979maximum,
  title={Maximum likelihood estimation of observer error-rates using the EM algorithm},
  author={Dawid, Alexander Philip and Skene, Allan M},
  journal={Journal of the Royal Statistical Society: Series C (Applied Statistics)},
  volume={28},
  number={1},
  pages={20--28},
  year={1979},
  publisher={Wiley Online Library}
}

@inproceedings{rodrigues2018deep,
  title={Deep learning from crowds},
  author={Rodrigues, Filipe and Pereira, Francisco},
  booktitle={Proceedings of the AAAI conference on artificial intelligence},
  volume={32},
  number={1},
  year={2018}
}

@article{geng2016label,
  title={Label distribution learning},
  author={Geng, Xin},
  journal={IEEE Transactions on Knowledge and Data Engineering},
  volume={28},
  number={7},
  pages={1734--1748},
  year={2016},
  publisher={IEEE}
}

@article{
hasson,
author = {Uri Hasson  and Yuval Nir  and Ifat Levy  and Galit Fuhrmann  and Rafael Malach },
title = {Intersubject Synchronization of Cortical Activity During Natural Vision},
journal = {Science},
volume = {303},
number = {5664},
pages = {1634-1640},
year = {2004},
doi = {10.1126/science.1089506},
URL = {https://www.science.org/doi/abs/10.1126/science.1089506},
eprint = {https://www.science.org/doi/pdf/10.1126/science.1089506}}

@article{wallot2018analyzing,
  title={Analyzing multivariate dynamics using cross-recurrence quantification analysis (crqa), diagonal-cross-recurrence profiles (dcrp), and multidimensional recurrence quantification analysis (mdrqa)--a tutorial in r},
  author={Wallot, Sebastian and Leonardi, Giuseppe},
  journal={Frontiers in psychology},
  volume={9},
  pages={2232},
  year={2018},
  publisher={Frontiers Media SA}
}

@article{palumbo2017interpersonal,
  title={Interpersonal autonomic physiology: A systematic review of the literature},
  author={Palumbo, Richard V and Marraccini, Marisa E and Weyandt, Lisa L and Wilder-Smith, Oliver and McGee, Heather A and Liu, Siwei and Goodwin, Matthew S},
  journal={Personality and social psychology review},
  volume={21},
  number={2},
  pages={99--141},
  year={2017},
  publisher={Sage Publications Sage CA: Los Angeles, CA}
}

@book{camerer2003behavioral,
  title={Behavioral game theory: Experiments in strategic interaction},
  author={Camerer, Colin},
  year={2003},
  publisher={Princeton university press}
}

@article{schelling1958strategy,
  title={The strategy of conflict. Prospectus for a reorientation of game theory},
  author={Schelling, Thomas C},
  journal={Journal of Conflict Resolution},
  volume={2},
  number={3},
  pages={203--264},
  year={1958},
  publisher={Sage Publications Sage CA: Thousand Oaks, CA}
}

@article{prelec2004bayesian,
  title={A Bayesian truth serum for subjective data},
  author={Prelec, Drazen},
  journal={science},
  volume={306},
  number={5695},
  pages={462--466},
  year={2004},
  publisher={American Association for the Advancement of Science}
}

@article{miller2005eliciting,
  title={Eliciting informative feedback: The peer-prediction method},
  author={Miller, Nolan and Resnick, Paul and Zeckhauser, Richard},
  journal={Management Science},
  volume={51},
  number={9},
  pages={1359--1373},
  year={2005},
  publisher={INFORMS}
}

@article{mohammad2013crowdsourcing,
  title={Crowdsourcing a word--emotion association lexicon},
  author={Mohammad, Saif M and Turney, Peter D},
  journal={Computational intelligence},
  volume={29},
  number={3},
  pages={436--465},
  year={2013},
  publisher={Wiley Online Library}
}

@article{gamevibe,
  title={GameVibe: a multimodal affective game corpus},
  author={Barthet, Matthew and Kaselimi, Maria and Pinitas, Kosmas and Makantasis, Konstantinos and Liapis, Antonios and Yannakakis, Georgios N},
  journal={Scientific Data},
  volume={11},
  number={1},
  pages={1306},
  year={2024},
  publisher={Nature Publishing Group UK London}
}

@inproceedings{lopes2017ranktrace,
  title={Ranktrace: Relative and unbounded affect annotation},
  author={Lopes, Phil and Yannakakis, Georgios N and Liapis, Antonios},
  booktitle={2017 Seventh International Conference on Affective Computing and Intelligent Interaction (ACII)},
  pages={158--163},
  year={2017},
  organization={IEEE}
}

@inproceedings{barthet2023knowing,
  title={Knowing your annotator: Rapidly testing the reliability of affect annotation},
  author={Barthet, Matthew and Trivedi, Chintan and Pinitas, Kosmas and Xylakis, Emmanouil and Makantasis, Konstantinos and Liapis, Antonios and Yannakakis, Georgios N},
  booktitle={2023 11th International Conference on Affective Computing and Intelligent Interaction Workshops and Demos (ACIIW)},
  pages={1--8},
  year={2023},
  organization={IEEE}
}

@inproceedings{pinitas2025privileged,
  title={Privileged Contrastive Pretraining for Multimodal Affect Modelling},
  author={Pinitas, Kosmas and Makantasis, Konstantinos and Yannakakis, Georgios},
  booktitle={Proceedings of the 27th International Conference on Multimodal Interaction},
  pages={317--325},
  year={2025}
}

@article{weidenholzer2010coordination,
  title={Coordination games and local interactions: a survey of the game theoretic literature},
  author={Weidenholzer, Simon},
  journal={Games},
  volume={1},
  number={4},
  pages={551--585},
  year={2010},
  publisher={MDPI}
}

@article{makantasis2023lab,
  title={From the lab to the wild: Affect modeling via privileged information},
  author={Makantasis, Konstantinos and Pinitas, Kosmas and Liapis, Antonios and Yannakakis, Georgios N},
  journal={IEEE Transactions on Affective Computing},
  volume={15},
  number={2},
  pages={380--392},
  year={2023},
  publisher={IEEE}
}

@inproceedings{pinitas2024varying,
  title={Varying the context to advance affect modelling: A study on game engagement prediction},
  author={Pinitas, Kosmas and Rasajski, Nemanja and Barthet, Matthew and Kaselimi, Maria and Makantasis, Konstantinos and Liapis, Antonios and Yannakakis, Georgios N},
  booktitle={2024 12th International Conference on Affective Computing and Intelligent Interaction (ACII)},
  pages={194--202},
  year={2024},
  organization={IEEE}
}

@article{itti2001computational,
  title={Computational modelling of visual attention},
  author={Itti, Laurent and Koch, Christof},
  journal={Nature reviews neuroscience},
  volume={2},
  number={3},
  pages={194--203},
  year={2001},
  publisher={Nature Publishing Group UK London}
}

@article{kayser2005mechanisms,
  title={Mechanisms for allocating auditory attention: an auditory saliency map},
  author={Kayser, Christoph and Petkov, Christopher I and Lippert, Michael and Logothetis, Nikos K},
  journal={Current biology},
  volume={15},
  number={21},
  pages={1943--1947},
  year={2005},
  publisher={Elsevier}
}

@inproceedings{pinitas2026lasca,
  title={LaScA: Language-Conditioned Scalable Modelling of Affective Dynamics},
  author={Pinitas, Kosmas and Maglogiannis, Ilias},
  booktitle={Proceedings of the IEEE/CVF Conference on Computer Vision and Pattern Recognition},
  pages={5379--5388},
  year={2026}
}

@inproceedings{pinitas2026beyond,
  title={Beyond the Mean: Modelling Annotation Distributions in Continuous Affect Prediction},
  author={Pinitas, Kosmas and Maglogiannis, Ilias},
  booktitle={Proceedings of the IEEE/CVF Conference on Computer Vision and Pattern Recognition},
  pages={5370--5378},
  year={2026}
}

\end{document}